\documentclass[letterpaper,10pt,conference]{ieeeconf}

\IEEEoverridecommandlockouts
\usepackage{amsmath,amssymb,mathtools}
\usepackage{bm}
\usepackage{graphicx}
\usepackage{epstopdf}
\usepackage{psfrag,color}
\usepackage[hidelinks]{hyperref}

\DeclarePairedDelimiter\norm{\lVert}{\rVert}

\newtheorem{assumption}{Assumption}
\newtheorem{theorem}{Theorem}
\newtheorem{lemma}{Lemma}
\newtheorem{corollary}{Corollary}
\newtheorem{remark}{Remark}

\newcommand{\R}{\mathbb{R}}

\newcommand{\xhat}{\hat{x}}

\newcommand{\tilx}{\tilde{x}}

\newcommand{\res}{r}
\newcommand{\Id}{\mathbb{I}}

\title{\LARGE \bf State Observers with Prescribed Output Error Bounds}

\author{Nilay~Kant%
\thanks{The author is with the Department of Mechanical and Aerospace Engineering, Missouri University of Science and Technology, Rolla, MO, USA.}%
\thanks{Corresponding author: Nilay Kant (email: nilaykant@mst.edu)}%
}

\begin{document}

\maketitle

\begin{abstract}
This paper presents a state observer design for continuous linear time-invariant (LTI) systems subject to unknown bounded disturbances that enforces a prescribed bound on the observer output error (residual). The proposed observer augments a Luenberger observer with state resets triggered when the residual norm reaches a prescribed bound. The reset map guarantees contraction of the residual at jump instants, forward invariance of the prescribed residual envelope, and strict decrease of the estimation-error Lyapunov function at every reset, while preserving the uniform boundedness properties of a standard Luenberger observer. Notably, the observer design does not require knowledge of the disturbance bound. Simulation results confirm the analysis: the residual remains within the prescribed bound, whereas a standard Luenberger observer with the same gains violates this bound.
\end{abstract}
	

%

\section{Introduction}\label{sec1}

State observers are essential in modern control and monitoring systems, enabling output-feedback control, fault detection \cite{patton1997observer, jeong2019fault}, and supervisory decision making \cite{sneider2002observer} when full state measurements are unavailable. Classical continuous-time observers provide asymptotic convergence and robustness guarantees under suitable assumptions. However, these do not explicitly regulate transient mismatches between measured and estimated outputs, referred to as the observer residual. In observer-based control, large residuals may degrade performance when state estimates are used for feedback. In such applications, it may therefore be desirable to directly regulate the residual and ensure that it remains within a prescribed bound. This motivates an observer design that can enforce a designer-specified bound on the residual.

\emph{Novelty of the work:} This paper introduces an observer architecture for LTI systems subject to bounded disturbances with unknown bounds that enforces a prescribed bound on the observer residual. The design augments a continuous-time Luenberger observer with state resets triggered when the residual norm reaches the prescribed bound. A closed-form reset law contracts the residual at each jump, thereby rendering the prescribed residual envelope forward invariant. Despite the induced state jumps, the estimation-error Lyapunov function strictly decreases at every reset, while the uniform boundedness properties of the Luenberger observer are preserved.

The closest related work \cite{torben2023resetting} employs residual-based resets to improve transient estimation performance without high continuous gains; however, the residual threshold serves only as a triggering mechanism. In contrast, the present work treats the residual bound as an invariant performance specification and enforces it by design. High-gain observers \cite{khalil2017high,khalil2014high} can also reduce transient residual deviations, but may exhibit peaking and increased sensitivity to measurement noise. The proposed architecture instead regulates transient residual behavior through discrete resets without requiring aggressive continuous gains or knowledge of the disturbance bound. Thus, unlike standard robustness bounds that depend on the observer dynamics and disturbance magnitude, the proposed approach \emph{directly enforces a designer-prescribed residual bound}.

Hybrid and reset mechanisms \cite{haddad2006impulsive, goebel2012hybrid} in observer design are not new, but have been studied in different contexts. For example, reset strategies have been used to improve transient response or reduce performance indices such as the $L_2$ gain of the estimation error \cite{hosseini2019lmi}. Resetting adaptive observers have been developed for joint state and parameter estimation in uncertain nonlinear systems \cite{paesa2011reset, paesa2011optimal}, and reset observers for linear time-varying delay systems have been analyzed to establish global asymptotic and finite-gain $L_2$ stability \cite{zhao2014reset}. Event-triggered observers have also been proposed to regulate communication load by transmitting measurements only when a triggering condition is satisfied \cite{petri2021event}. The remainder of the paper is organized as follows. Section~II introduces notation. Section~III presents the system model, assumptions, and problem formulation. Section~IV develops the reset observer and its theoretical guarantees. Section~V presents simulation results, followed by concluding remarks in Section~VI.

\section{Notation}

For a matrix $X$, $\norm{X}$ denotes the induced $2$-norm. For a vector $v$, $\norm{v}$ denotes the Euclidean norm.
$P \succ 0$ denotes that $P$ is positive definite. For a symmetric matrix $\mathcal{S}$, $\lambda_{\min}(\mathcal{S})$ and $\lambda_{\max}(\mathcal{S})$ denote its minimum and maximum eigenvalues. For a hybrid signal $v(t)$, $v(t_k^-) \triangleq v^-$ and $v(t_k^+) \triangleq v^+$ denote its left and right limits at a jump time $t_k$. Finally, $\mathbb{I}$ denotes the identity matrix of appropriate dimension.

\section{Problem Formulation}

\subsection{System Dynamics}
Consider the continuous-time LTI system:
\begin{subequations}\label{eq:plant}
\begin{align}
\dot{x}(t) &= A x(t) + B u(t) + d(t) \label{eq:plant_x}\\
y(t) &= C x(t) \label{eq:plant_y}
\end{align}
\end{subequations}
where $x\in\R^n$ is the state vector, $u\in\R^m$ is bounded and piecewise continuous control input, and $y\in\R^p$ is the system output. The system matrices $
A\in\R^{n\times n},
B\in\R^{n\times m},
C\in\R^{p\times n}
$ are known, and the signal $d(t) \in\R^{n}$ represents an unknown exogenous disturbance acting on the system. Let $t_0$ denote the initial time.

\begin{assumption}\label{ass:rate}
The disturbance $d(t)$ is Lebesgue measurable and essentially bounded, with
\begin{equation}
\|d(t)\| \leq \bar d, \qquad \text{for almost all } t \geq t_0 \label{eq:dist_rate}
\end{equation}
where $\bar d > 0$ is a finite scalar and need not be known.
\end{assumption}

\begin{assumption}\label{ass:detect}
The pair $(A,C)$ is detectable.
\end{assumption}

\begin{assumption}\label{ass:init_set}
The estimation error satisfies $\|\tilx(t_0)\| \le \bar x_0$ for some $\bar x_0>0$.
\end{assumption}

\begin{assumption}\label{ass:fullrankC}
The matrix $C\in\R^{p\times n}$ has full row rank.
\end{assumption}

\subsection{Continuous-Time Observer}

A Luenberger observer for the states is given by:
\begin{equation}
\dot{\xhat}(t) = A \xhat(t) + B u(t) + L\left[y(t) - C \xhat(t) \right] \label{eq:ct_observer}
\end{equation}
where $L \in \R^{n\times p}$ is the observer gain. Define the estimation error
$\tilx \triangleq  x - \xhat$. Using \eqref{eq:plant} and \eqref{eq:ct_observer}, the estimation error dynamics is given by
\begin{equation}
\dot{\tilx}(t) = A_c \,\tilx(t) + d(t), \qquad A_c \triangleq A - L C \label{eq:error_dyn}
\end{equation}
Since Assumption~\ref{ass:detect} holds, the gain $L$ in \eqref{eq:error_dyn} is chosen such that $A_c$ is Hurwitz. Consequently, for any symmetric matrix $Q = Q^\top \succ 0$, there exists a unique symmetric matrix $P = P^\top \succ 0$ satisfying the Lyapunov equation \cite{antsaklis2007linear}:
\begin{equation}\label{eq:lyap}
A_c^\top P + P A_c = -Q.
\end{equation}

\begin{lemma}[ISS of the observer error dynamics]\label{lemma1}
Let $Q = Q^\top \succ 0$ be given and let $P = P^\top \succ 0$ satisfy \eqref{eq:lyap}.
Under Assumption~\ref{ass:rate}, the error dynamics \eqref{eq:error_dyn} is input-to-state stable (ISS) with respect to $d(t)$ in the sense that, for all $t \ge t_0$
\begin{equation}\label{eq:iss_bound_P}
V(t) \le e^{-\alpha (t-t_0)} V(t_0) + \frac{\beta}{\alpha}\,\bar d^{\,2}
\end{equation}
where $V(t) \triangleq \tilx(t)^\top P \tilx(t)$ and the constants
\[
\alpha \triangleq \frac{\lambda_{\min}(Q)}{2\,\lambda_{\max}(P)}, \qquad
\beta \triangleq \frac{2\,\|P\|^2}{\lambda_{\min}(Q)}
\]
are positive.
Consequently,
\begin{equation}\label{eq:iss_bound_x}
\|\tilx(t)\| \le
\sqrt{\frac{\lambda_{\max}(P)}{\lambda_{\min}(P)}}\,e^{-\frac{\alpha}{2}(t-t_0)}\|\tilx(t_0)\|
+
\sqrt{\frac{\beta}{\alpha\,\lambda_{\min}(P)}}\,\bar d
\end{equation}
\end{lemma}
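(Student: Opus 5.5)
The plan is a standard Lyapunov estimate along Carathéodory solutions of \eqref{eq:error_dyn}. Since $d$ is only measurable and essentially bounded, $\tilx$ is absolutely continuous. Hence $V(t)=\tilx^\top P\tilx$ is absolutely continuous, and its derivative exists for almost all $t\ge t_0$. Using \eqref{eq:lyap},
\begin{equation*}
\dot V = \tilx^\top(A_c^\top P + PA_c)\tilx + 2\tilx^\top P d = -\tilx^\top Q\tilx + 2\tilx^\top P d
\le -\lambda_{\min}(Q)\|\tilx\|^2 + 2\|P\|\|\tilx\|\|d\| .
\end{equation*}

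The key step is to split the cross term with Young's inequality so that half of the quadratic decay survives:
\begin{equation*}
2\|P\|\|\tilx\|\|d\| \le \tfrac{\lambda_{\min}(Q)}{2}\|\tilx\|^2 + \tfrac{2\|P\|^2}{\lambda_{\min}(Q)}\|d\|^2 .
\end{equation*}
This gives $\dot V \le -\tfrac{\lambda_{\min}(Q)}{2}\|\tilx\|^2 + \beta \|d\|^2$. Next use $\|\tilx\|^2 \ge V/\lambda_{\max}(P)$ and Assumption~\ref{ass:rate}. Together they yield $\dot V \le -\alpha V + \beta \bar d^{\,2}$ for almost all $t\ge t_0$, with $\alpha,\beta$ exactly as defined. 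Both constants are positive because $Q\succ0$ and $P\succ0$, and $\|P\|>0$.

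Integrate this differential inequality by the comparison argument. Multiply by $e^{\alpha(t-t_0)}$; the product $e^{\alpha(t-t_0)}V$ is absolutely continuous, so integrating from $t_0$ to $t$ gives
\begin{equation*}
V(t)\le e^{-\alpha(t-t_0)}V(t_0) + \tfrac{\beta}{\alpha}\bar d^{\,2}\bigl(1-e^{-\alpha(t-t_0)}\bigr).
\end{equation*}
Dropping the factor $\bigl(1-e^{-\alpha(t-t_0)}\bigr)\le 1$ gives \eqref{eq:iss_bound_P}.

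For \eqref{eq:iss_bound_x}, use $\lambda_{\min}(P)\|\tilx(t)\|^2\le V(t)$ and $V(t_0)\le\lambda_{\max}(P)\|\tilx(t_0)\|^2$. Then apply $\sqrt{a+b}\le\sqrt a+\sqrt b$ to the square root of the right-hand side; this splits the exponent $e^{-\alpha(t-t_0)}$ into $e^{-\frac{\alpha}{2}(t-t_0)}$ and produces the stated two terms.

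The only delicate point is that $d$ is merely measurable, so $\dot V$ exists only almost everywhere. This is why I would carry out the comparison step through the integrated form of an absolutely continuous function rather than a pointwise ODE comparison. Everything else is routine.
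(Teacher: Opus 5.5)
Your proof is correct and follows the paper's argument step for step: the same derivative computation via the Lyapunov equation, the same Young split giving $\dot V \le -\alpha V + \beta \bar d^{\,2}$, and the same final conversion using $\sqrt{a+b}\le\sqrt{a}+\sqrt{b}$. The only difference is that you integrate $e^{\alpha(t-t_0)}V$ directly as an absolutely continuous function instead of citing the comparison lemma. That is a slightly more careful way to handle the fact that $\dot V$ exists only almost everywhere when $d$ is merely measurable.
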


\begin{proof}
The time derivative of $V$ along solutions of \eqref{eq:error_dyn} is:
\begin{align*}
\dot V &= \tilx^\top (A_c^\top P + P A_c)\tilx + 2\tilx^\top P d = -\tilx^\top Q \tilx + 2\tilx^\top P d
\end{align*}

\noindent Using Cauchy-Schwarz and Young inequalities yields
\begin{align*}
&\dot V \leq -\lambda_{\min}(Q)\|\tilx\|^2 + 2\,\|\tilx\|\,\|P\|\, \|d\| \cr
&\leq -\lambda_{\min}(Q)\|\tilx\|^2 + \frac{\lambda_{\min}(Q)}{2}\|\tilx\|^2 + \frac{2\|P\|^2}{\lambda_{\min}(Q)}\|d\|^2\cr
& \le -\frac{\lambda_{\min}(Q)}{2}\|\tilx\|^2 + \frac{2\|P\|^2}{\lambda_{\min}(Q)}\|d\|^2
\end{align*}

\noindent Since $V = \tilx^\top P \tilx \le \lambda_{\max}(P)\|\tilx\|^2$, it follows that $\|\tilx\|^2 \ge V/\lambda_{\max}(P)$, and using the bound on $d$ from Assumption \ref{ass:rate} in the above inequality, we get
\begin{equation}\label{eq:V_dissipation}
\dot V \le -\alpha V + \beta \bar d^{\,2}
\end{equation}

\noindent Consider the scalar comparison system
\[
\dot z(t) = -\alpha z(t) + \beta \bar d^{\,2}, \qquad z(t_0)=V(t_0)
\]
Its solution is given by
\[
z(t) = e^{-\alpha (t-t_0)} V(t_0)
+ \int_{t_0}^{t} e^{-\alpha (t-\tau)} \beta \bar d^{\,2}\, d\tau
\]
Evaluating the integral gives
\begin{align*}
z(t)
&= e^{-\alpha (t-t_0)} V(t_0)
+ (\beta/\alpha)\bar d^{\,2}\bigl(1-e^{-\alpha (t-t_0)}\bigr)\cr
& \le
e^{-\alpha (t-t_0)} V(t_0)
+ (\beta/\alpha)\bar d^{\,2}
\end{align*}
Since $\dot V \le -\alpha V + \beta \bar d^{\,2}$ and $z$ solves $\dot z = -\alpha z + \beta \bar d^{\,2}$ with $z(t_0)=V(t_0)$,
the comparison lemma \cite[Lemma~3.4]{khalil2002nonlinear} implies $V(t)\le z(t)$ for all $t\ge t_0$, which establishes \eqref{eq:iss_bound_P}. Finally, since $P=P^\top \succ 0$, $\lambda_{\min}(P)\|\tilx(t)\|^2 \le V(t)$ and $V(t_0)\le \lambda_{\max}(P)\|\tilx(t_0)\|^2$. Substituting these bounds in \eqref{eq:iss_bound_P} and then using $\sqrt{a+b}\le \sqrt{a}+\sqrt{b}$ for $a,b\ge 0$ yields \eqref{eq:iss_bound_x}.
\end{proof}

\begin{corollary}\label{cor:ct_uniform}
Under Assumptions~\ref{ass:rate} and \ref{ass:init_set}, for all $t \ge t_0$, the estimation error of the continuous-time observer \eqref{eq:ct_observer}-\eqref{eq:error_dyn} satisfies
\begin{equation}\label{eq:uniform_V}
V(t) \le V(t_0) + (\beta/\alpha)\,\bar d^{\,2}
\end{equation}
where $V(t)=\tilx(t)^\top P\tilx(t)$ and $\alpha,\beta$ are as in Lemma~\ref{lemma1}.
Consequently,
\begin{align}\label{eq:uniform_x}
\|\tilx(t)\| &\le
\sqrt{\frac{\lambda_{\max}(P)}{\lambda_{\min}(P)}}\,\|\tilx(t_0)\|
+
\sqrt{\frac{\beta}{\alpha\,\lambda_{\min}(P)}}\,\bar d \le M\cr
M &\triangleq
\sqrt{\frac{\lambda_{\max}(P)}{\lambda_{\min}(P)}}\,\bar x_0 \
+
\sqrt{\frac{\beta}{\alpha\,\lambda_{\min}(P)}}\,\bar d
\end{align}
\end{corollary}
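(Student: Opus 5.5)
The plan is to read the corollary directly off Lemma~\ref{lemma1}, dropping the decaying exponential. Since $\alpha>0$, we have $0< e^{-\alpha(t-t_0)}\le 1$ for all $t\ge t_0$. Because $V(t_0)\ge 0$ (as $P\succ 0$), this gives $e^{-\alpha(t-t_0)}V(t_0)\le V(t_0)$. Substituting into \eqref{eq:iss_bound_P} yields \eqref{eq:uniform_V} immediately.

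For the norm bound, I will start from \eqref{eq:iss_bound_x} rather than redoing the eigenvalue sandwich. The factor $e^{-\frac{\alpha}{2}(t-t_0)}$ is again at most one for $t\ge t_0$, and $\|\tilx(t_0)\|\ge 0$, so the first term is bounded by $\sqrt{\lambda_{\max}(P)/\lambda_{\min}(P)}\,\|\tilx(t_0)\|$. This gives the first inequality in \eqref{eq:uniform_x}.

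Equivalently, one could proceed from \eqref{eq:uniform_V} directly. Use $\lambda_{\min}(P)\|\tilx(t)\|^2\le V(t)$ and $V(t_0)\le\lambda_{\max}(P)\|\tilx(t_0)\|^2$, divide by $\lambda_{\min}(P)$, take square roots, and apply $\sqrt{a+b}\le\sqrt a+\sqrt b$.

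Finally, Assumption~\ref{ass:init_set} gives $\|\tilx(t_0)\|\le \bar x_0$. The coefficient $\sqrt{\lambda_{\max}(P)/\lambda_{\min}(P)}$ is positive, so replacing $\|\tilx(t_0)\|$ by $\bar x_0$ yields the bound $M$. The only point requiring care is that all constants multiplying the substituted quantities are nonnegative, so that the monotone replacements are valid. There is no real obstacle. The one subtlety worth stating is that Assumption~\ref{ass:rate} is needed only through Lemma~\ref{lemma1}, and that $M$ is uniform in $t$ and in the initial condition within the ball of radius $\bar x_0$.
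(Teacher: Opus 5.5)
Your proposal is correct and follows essentially the paper's proof: bound $e^{-\alpha(t-t_0)}\le 1$ in \eqref{eq:iss_bound_P} to get \eqref{eq:uniform_V}, then pass to \eqref{eq:uniform_x} via $\lambda_{\min}(P)\|\tilx\|^2\le V\le\lambda_{\max}(P)\|\tilx\|^2$ and Assumption~\ref{ass:init_set}. Your primary route, bounding the exponential directly in \eqref{eq:iss_bound_x}, simply reuses the eigenvalue-sandwich step already carried out in Lemma~\ref{lemma1}.
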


\begin{proof}
Using $e^{-\alpha(t-t_0)}\le 1$  in \eqref{eq:iss_bound_P} yields \eqref{eq:uniform_V}. Subsequently, \eqref{eq:uniform_x} follows using $\lambda_{\min}(P)\|\tilx\|^2 \le V \le \lambda_{\max}(P)\|\tilx\|^2$, together with Assumption~\ref{ass:init_set}.
\end{proof}

\noindent Note that Lemma~\ref{lemma1} and Corollary~\ref{cor:ct_uniform} provide standard bounds for the continuous-time observers. These bounds will later serve as a baseline for comparison with the proposed observer.

\subsection{Motivation for a Hybrid Observer}

Lemma \ref{lemma1} establishes that the observer in \eqref{eq:ct_observer} is ISS with respect to bounded disturbances \cite{khalil2002nonlinear}, ensuring bounded estimation error. However, this does not guarantee that the residual (output error)
\begin{equation}\label{eq:residual}
r(t) \triangleq y(t) - C \xhat(t)
\end{equation}

\noindent remains within a prescribed envelope. In particular, disturbances may induce large transient residual deviations, even when the observer is stable. In observer-based control and supervisory applications, such deviations may be undesirable when state estimates are used for feedback or decision making. This motivates the following problem formulation.

\noindent \textbf{Problem statement:} Given any $\delta > 0$, and an initial residual satisfying $\|r(t_0)\|\le\delta$, design an observer such that
\[
\|r(t)\| \le \delta, \quad \forall t \ge t_0
\]
for all disturbances satisfying Assumption \ref{ass:rate}.

To this end, we use $r(t)$ as a supervisory signal and introduce a hybrid observer with continuous flows and discrete resets \cite{goebel2012hybrid}. This hybrid observer preserves the continuous-time robustness of the baseline Luenberger observer while enabling explicit containment of the residual norm.

\section{Observer with Prescribed Residual Bound}\label{sec:main}

\subsection{Hybrid Observer Formulation}\label{sec:design}
We augment the continuous-time observer \eqref{eq:ct_observer} with resets driven by the residual $r(t)$, defined in \eqref{eq:residual}. Let $\delta>0$ denote a designer-specified residual bound and let $\rho \in [0,1)$ denote a prescribed residual contraction factor. We define the resulting hybrid observer-plant dynamics as follows:
\begin{subequations}\label{eq:hyb_plant_observer}
\begin{align}
\dot{x} &= A x + B u + d
&& (x,\xhat) \in \mathcal{C} \label{eq:hyb_observer_specific}\\
\dot{\xhat} &= A \xhat + B u + L\bigl(y - C \xhat\bigr)
&& (x,\xhat) \in \mathcal{C} \label{eq:hyb_flow}\\
x^+ &= x^-
&& (x,\xhat) \in \mathcal{D} \label{eq:hyb_plant_jump}\\
\xhat^+ &= \xhat^- + J r^-
&& (x,\xhat) \in \mathcal{D} \label{eq:hyb_jump}
\end{align}
\end{subequations}

\noindent where $(x^-,\hat x^-)$ and $(x^+,\hat x^+)$ denote the plant and observer states immediately before and after a reset, respectively, $J \in \R^{n\times p}$ denotes a constant reset gain to be designed, and $r^-$ denotes the residual immediately before the reset. Note that the plant state $x$ remains continuous across resets, whereas only the observer state $\xhat$ is subject to jumps. The flow and jump sets for \eqref{eq:hyb_plant_observer} are defined as
\begin{equation}\label{eq:CD_sets}
\mathcal{C} \triangleq \{(x,\xhat):\|r\|\le\delta\}, \qquad
\mathcal{D} \triangleq \{(x,\xhat):\|r\|\ge\delta\}
\end{equation}

\noindent Note that at boundary points where both flow and jump are possible, jumps are given priority. Using $x^+=x^-$, together with $\tilx=x-\xhat$ and $r=C\tilx$ (which follows from \eqref{eq:plant_y} and \eqref{eq:residual}), the estimation-error and residual jump dynamics are
\begin{subequations}\label{eq:jump_error}
\begin{align}
\tilx^+ &= (\Id-JC)\tilx^- \label{eq:jump_error_states}\\
r^+ &= C\tilx^+ = C(\Id-JC)\tilx^- \label{eq:jump_error_res}
\end{align}
\end{subequations}

\noindent The properties of the hybrid observer are analyzed next.

%

\subsection{Analysis and Performance Guarantees}\label{sec:perf_guaratees}
The hybrid observer in \eqref{eq:hyb_flow}, \eqref{eq:hyb_jump} uses residual-triggered resets as a supervisory mechanism. The proposed observer architecture is constructed to enforce contraction of the residual at every reset, render a designer-prescribed residual envelope positively invariant, and preserve the robustness and boundedness properties of the baseline continuous-time observer in \eqref{eq:ct_observer}. The following theorem formalizes these guarantees and constitute the main result of the paper.

\begin{theorem}[Main result]\label{thm:main}
Suppose Assumptions \ref{ass:rate}-\ref{ass:fullrankC} hold. Let $\delta>0$ and $\rho \in [0,1)$ be
given and choose the reset gain $J$ as
\begin{equation}\label{eq:J_constructive}
J \triangleq (1-\rho)P^{-1}C^\top\bigl(CP^{-1}C^\top\bigr)^{-1}
\end{equation}
If $\|\res(t_0)\|\le\delta$, then the hybrid observer-plant system
\eqref{eq:hyb_plant_observer}-\eqref{eq:CD_sets} satisfies the following claims:
\begin{enumerate}
\renewcommand{\labelenumi}{(\roman{enumi})}
\item (\emph{Residual contraction at jumps})
\begin{equation}\label{eq:thm_res_contract}
\|\res^+\| = \rho\,\|\res^-\|
\end{equation}

\item (\emph{Forward invariance of the residual envelope}) The residual envelope, $\{r : \|r\| \le \delta\}$ is forward invariant, \emph{i.e.},
\begin{equation}\label{eq:thm_res_envelope}
\|r(t)\| \le \delta, \quad \forall t \ge t_0
\end{equation}
\vspace{-.08in}
\item (\emph{Strict Lyapunov decrease at jumps})
Let $V\triangleq\tilx^\top P\tilx$ and $S\triangleq CP^{-1}C^\top\succ0$. At every reset,
\begin{equation}\label{eq:nonexp_V}
V^+=V^--(1-\rho^2)r^{-\top}S^{-1}r^-
\end{equation}
In particular, since $\|r^-\|=\delta>0$,
\begin{equation}\label{eq: strict_decrease}
V^--V^+\ge\frac{(1-\rho^2)\delta^2}{\lambda_{\max}(S)}>0
\end{equation}

\item (\emph{Uniform boundedness of estimation error})
For all $t\ge t_0$,
\begin{equation}\label{eq:thm_hybrid_ub}
V(t) \le V(t_0) + \frac{\beta}{\alpha}\,\bar d^{\,2}
\end{equation}
where $\alpha,\beta$ are as in Lemma~\ref{lemma1}. Consequently,
\begin{align}\label{eq:thm_hybrid_ub_x}
\|\tilx(t)\|
&\le
M \\
M \triangleq \sqrt{\frac{\lambda_{\max}(P)}{\lambda_{\min}(P)}}\,\bar x_0
&+
\sqrt{\frac{\beta}{\alpha\,\lambda_{\min}(P)}}\,\bar d \notag
\end{align}
\end{enumerate}
\end{theorem}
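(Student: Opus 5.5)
The proof rests on two algebraic identities for the reset map \eqref{eq:jump_error}, followed by a hybrid-time argument that combines flows and jumps. Since $C$ has full row rank (Assumption~\ref{ass:fullrankC}) and $P^{-1}\succ0$, the matrix $S=CP^{-1}C^\top$ is positive definite, so $J$ in \eqref{eq:J_constructive} is well defined. With this choice, $CJ=(1-\rho)\Id_p$.

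\emph{Claim (i).} Using \eqref{eq:jump_error_res},
\[
r^+ = C\tilx^- - CJC\tilx^- = r^- - (1-\rho)r^- = \rho r^- .
\]
Taking norms gives \eqref{eq:thm_res_contract}.

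\emph{Claim (iii).} Expand $V^+=\tilx^{-\top}(\Id-JC)^\top P(\Id-JC)\tilx^-$. We have $PJ=(1-\rho)C^\top S^{-1}$, and hence $J^\top PJ=(1-\rho)^2S^{-1}CP^{-1}C^\top S^{-1}=(1-\rho)^2S^{-1}$. Substituting $r^-=C\tilx^-$ gives
\[
V^+=V^- - 2(1-\rho)r^{-\top}S^{-1}r^- + (1-\rho)^2 r^{-\top}S^{-1}r^- .
\]
Since $2(1-\rho)-(1-\rho)^2=1-\rho^2$, this is exactly \eqref{eq:nonexp_V}. For \eqref{eq: strict_decrease}, use $r^{\top}S^{-1}r\ge \|r\|^2/\lambda_{\max}(S)$. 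A jump occurs only from $\mathcal D$ while the state has been flowing in $\mathcal C$, and the residual is continuous during flows, so $\|r^-\|=\delta$ at every jump.

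\emph{Claim (ii).} This follows from a hybrid-time argument. During a flow interval, $r=C\tilx$ is absolutely continuous. The interval begins with $\|r\|\le\delta$, either from the initial condition or because $\|r^+\|=\rho\delta<\delta$ after a jump. Whenever $\|r\|$ reaches $\delta$ the state lies in $\mathcal D$, and since jumps take priority a reset fires. By (i) the residual then drops to $\rho\delta$. Consequently $\|r\|$ can never exceed $\delta$ on flows, and jumps only shrink it.

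I also need to rule out Zeno behaviour, so that the bound holds for all $t\ge t_0$. After a reset, $\|r\|$ must travel from $\rho\delta$ to $\delta$. On flows, $\dot r=CA_c\tilx+Cd$ is bounded by $\|C\|(\|A_c\|M+\bar d)$, using the bound $M$ from (iv). This gives a uniform positive dwell time of at least $(1-\rho)\delta/(\|C\|(\|A_c\|M+\bar d))$.

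\emph{Claim (iv).} Let $t_1<t_2<\cdots$ be the reset times. On each flow interval, Lemma~\ref{lemma1}'s dissipation inequality \eqref{eq:V_dissipation} holds. By (iii), $V$ does not increase across jumps. I will show by induction that $V(t)\le e^{-\alpha(t-t_0)}V(t_0)+(\beta/\alpha)\bar d^2(1-e^{-\alpha(t-t_0)})$ for all $t$. Suppose this bound holds at $t_k^-$. Then it holds at $t_k^+$ because $V^+\le V^-$. Applying the comparison lemma on $[t_k,t_{k+1})$ with initial value $V(t_k^+)$ gives
\[
V(t)\le e^{-\alpha(t-t_k)}V(t_k^+)+(\beta/\alpha)\bar d^2\bigl(1-e^{-\alpha(t-t_k)}\bigr).
\]
Substituting the inductive bound for $V(t_k^+)$ shows the combined expression has the same form. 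Dropping the exponential factors then yields \eqref{eq:thm_hybrid_ub}. The bound \eqref{eq:thm_hybrid_ub_x} follows exactly as in Corollary~\ref{cor:ct_uniform}.

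The main obstacle is the well-posedness of the hybrid time domain, namely excluding Zeno solutions, because (ii) requires the bound for all $t\ge t_0$. Claim (iv) holds on every hybrid arc independently of (ii), so I will prove (iv) first and use the resulting bound $M$ to get the dwell-time estimate, which avoids any circularity.
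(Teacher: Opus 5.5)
Your proposal is correct and follows the paper's proof. Claims (i) and (ii) are argued the same way. Your explicit semigroup substitution in (iv) is exactly the paper's induction $V\le z$ with order-preservation of the comparison system. You also use the same ordering, bounding $\|\tilx\|\le M$ on the existing solution domain and then deriving the dwell time, to rule out Zeno behaviour without circularity. The only difference is in (iii): you expand $(\Id-JC)^\top P(\Id-JC)$ directly using $PJ=(1-\rho)C^\top S^{-1}$ and $J^\top PJ=(1-\rho)^2S^{-1}$. The paper instead splits $\tilx^-$ with the projector $\Pi=P^{-1}C^\top S^{-1}C$ into $P$-orthogonal parts. Your route is shorter, while the paper's makes visible that the reset only scales the component $\Pi\tilx^-$ by $\rho$ and leaves the $P$-orthogonal complement untouched.
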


\begin{proof}
We prove claims (i)-(iv) sequentially.

\noindent\emph{Proof of claim (i):}
Substituting  \eqref{eq:J_constructive} in \eqref{eq:jump_error}, we get
\begin{equation}\label{eq:res_plus}
\res^+ = C\Bigl(\Id - (1-\rho)P^{-1}C^\top(CP^{-1}C^\top)^{-1}C\Bigr)\tilx^-
\end{equation}

\noindent Since $C$ has full row rank from Assumption \ref{ass:fullrankC}, $CP^{-1}C^\top \succ 0$ and $CP^{-1}C^\top(CP^{-1}C^\top)^{-1}=\Id$. Using this in \eqref{eq:res_plus} and simplifying, we get
\[
\res^+ = \rho\,C\tilx^- = \rho\,\res^-
\]
which proves \eqref{eq:thm_res_contract} after taking norms on both sides.

\noindent\emph{Proof of claim (ii):}
During flows, the residual evolves continuously. By \eqref{eq:thm_res_contract}, immediately after any jump,
\[
\|\res^+\|=\rho\,\delta < \delta
\]
since $\rho \in [0,1)$. By definition of the flow and jump sets in \eqref{eq:CD_sets}, a jump is enforced whenever $\|\res\|=\delta$. Since $\|\res(t_0)\|\le \delta$ by assumption and $\res(t)$ is continuous during flows, the residual cannot exit the residual envelope $\{\res:\|\res\|\le \delta\}$. Therefore, this set is forward invariant, which establishes \eqref{eq:thm_res_envelope}.

\noindent\emph{Proof of claim (iii):}
Define the matrix
\begin{equation}\label{eq:projector}
\Pi \triangleq P^{-1}C^\top(CP^{-1}C^\top)^{-1}C
\end{equation}

\noindent Using  \eqref{eq:projector} and \eqref{eq:J_constructive}, we can express
\[
\Id - JC = \Id - (1-\rho)\Pi = (\Id-\Pi)+\rho\Pi
\]
Hence, from \eqref{eq:jump_error_states},
\[
\tilx^+ = (\Id-\Pi)\tilx^- + \rho \Pi \tilx^-
\]

Let $a \triangleq (\Id-\Pi)\tilx^-$ and $b \triangleq \Pi\tilx^-$. Then $\tilx^- = a+b$ and $\tilx^+ = a + \rho b$.
Consider the Lyapunov function $V(\tilx)=\tilx^\top P\tilx$, which defines the $P$-weighted metric associated with the estimation error. Evaluating $V^+$ gives
\begin{equation}\label{eq:Vplus_expand}
V^+ = (a+\rho b)^\top P(a+\rho b)
= a^\top P a + 2\rho\, a^\top P b + \rho^2 b^\top P b
\end{equation}

\noindent We next show that the cross term in \eqref{eq:Vplus_expand} vanishes. Note that
\begin{equation}\label{eq:CPi_eq_C}
C\Pi = CP^{-1}C^\top(CP^{-1}C^\top)^{-1}C = C
\end{equation}
and therefore $C(\Id-\Pi)=0$. Using $b=\Pi\tilx^-$ and $a=(\Id-\Pi)\tilx^-$, we have
\[
a^\top P b
= \tilx^{-\,\top}(\Id-\Pi)^\top P\,\Pi \tilx^-
\]

\noindent Using $P\Pi = C^\top(CP^{-1}C^\top)^{-1}C$ and $C(\Id-\Pi)=0$,
\begin{align*}
(\Id-\Pi)^\top P\,\Pi
&= (\Id-\Pi)^\top C^\top(CP^{-1}C^\top)^{-1}C \cr
&= \big(C(\Id-\Pi)\big)^\top (CP^{-1}C^\top)^{-1}C
= 0
\end{align*}
which implies $a^\top P b = 0$. Substituting $a^\top P b = 0$ in \eqref{eq:Vplus_expand},
\[
V^+ = a^\top P a + \rho^2 b^\top P b \le a^\top P a + b^\top P b
\]
Finally, since $\tilx^- = a+b$ and $a^\top P b=0$, we have
\[
V^- = (a+b)^\top P(a+b) = a^\top P a + b^\top P b
\]

\noindent Computing $V^- - V^+$, and using $b=P^{-1}C^\top S^{-1}r^-$ gives
\[
V^--V^+=(1-\rho^2)b^\top Pb=(1-\rho^2)r^{-\top}S^{-1}r^-
\]
This proves \eqref{eq:nonexp_V}. Since $S\succ0$ and $\|r^-\|=\delta>0$, the strict decrease bound in \eqref{eq: strict_decrease} follows.

\noindent\emph{Proof of claim (iv):}
Let $t_0$ denote the initial time and let $t_1,t_2,\ldots$ denote the subsequent reset times of the hybrid observer. Let $V(t)=\tilx(t)^\top P\tilx(t)$ and let $z(t)$ be the comparison function in the proof of Lemma \ref{lemma1}, \emph{i.e.}, the unique solution of
\begin{equation}\label{eq:z_dyn_claimiv}
\dot z(t) = -\alpha z(t) + \beta \bar d^{\,2}, \qquad z(t_0)=V(t_0)
\end{equation}

\noindent We prove by induction on $k\in\mathbb{N}_0$ that
\begin{equation}\label{eq:V_le_z_induction}
V(t) \le z(t), \qquad \forall t \in [t_k,t_{k+1})
\end{equation}
where $t_{k+1}\triangleq \inf\{t>t_k:\,(x(t),\xhat(t))\in\mathcal{D}\}$ if another jump occurs, and $t_{k+1}=\infty$ otherwise.

\emph{Base case ($k=0$):} If no reset occurs at $t_0$, then $V(t_0)=z(t_0)$. If $\|r(t_0)\|=\delta$, jump priority gives an initial reset and claim (iii) yields $V(t_0^+)\le V(t_0)=z(t_0)$. Thus, on the first flow interval $[t_0,t_1)$, the comparison argument in Lemma~\ref{lemma1} implies $V(t)\le z(t)$.

\emph{Inductive step:} Now, suppose \eqref{eq:V_le_z_induction} holds on $[t_k,t_{k+1})$ for some $k\ge 0$.
Since $z(t)$ is continuous, taking the left limit in \eqref{eq:V_le_z_induction} yields $V(t_{k+1}^-)\le z(t_{k+1})$ . At the jump time $t_{k+1}$, claim (iii) yields $V(t_{k+1}^+) \le V(t_{k+1}^-)$. Thus,
\[
V(t_{k+1}^+) \le z(t_{k+1})
\]

\noindent On the subsequent flow interval $[t_{k+1},t_{k+2})$, $\tilde x$ again satisfies \eqref{eq:error_dyn}, and hence the differential inequality \eqref{eq:V_dissipation} holds for almost all $t\in[t_{k+1},t_{k+2})$. Since $z(t)$ satisfies \eqref{eq:z_dyn_claimiv}, and the scalar comparison system is order-preserving with respect to its initial value, the inequality
\[
V(t) \le z(t), \qquad \forall t\in[t_{k+1},t_{k+2})
\]
follows from the initial condition $V(t_{k+1}^+) \le z(t_{k+1})$. This completes the induction argument and thus $V(t)\le z(t)$ over the solution domain. The minimum dwell-time result established in Section \ref{sec:wellposed} excludes Zeno behavior, and hence the solution is defined for all $t\ge t_0$. Using the explicit expression for $z(t)$ given in the proof of Lemma~\ref{lemma1} yields
\[
V(t) \le e^{-\alpha (t-t_0)} V(t_0) + (\beta / \alpha)\,\bar d^{\,2},
\qquad \forall t\ge t_0
\]
Using $e^{-\alpha(t-t_0)}\le 1$ establishes \eqref{eq:thm_hybrid_ub}. Finally, the bound \eqref{eq:thm_hybrid_ub_x} follows by using $\lambda_{\min}(P)\|\tilx\|^2 \le V \le \lambda_{\max}(P)\|\tilx\|^2$, together with Assumption~\ref{ass:init_set}.
\end{proof}

\begin{figure}[t!]
 \centering
\psfrag{A}[c][c][1][0]{\scriptsize $t_0$}
\psfrag{G}[c][c][1][0]{\scriptsize $t_1^-$}
\psfrag{B}[c][c][1][0]{\scriptsize $t_1^+$}
\psfrag{C}[c][c][1][0]{\scriptsize $ t_2^-$ }
\psfrag{D}[c][c][1][0]{\scriptsize $t_2^+$}
\psfrag{Q}[c][c][1][0]{\scriptsize  $t_3^-$}
\psfrag{F}[c][c][1][0]{\scriptsize  $t_3^+$}
\psfrag{H}[c][c][1][0]{\scriptsize  $\rho \delta$}
\psfrag{I}[c][c][1][0]{\scriptsize  $\delta$}
 \includegraphics[width=0.57\linewidth]{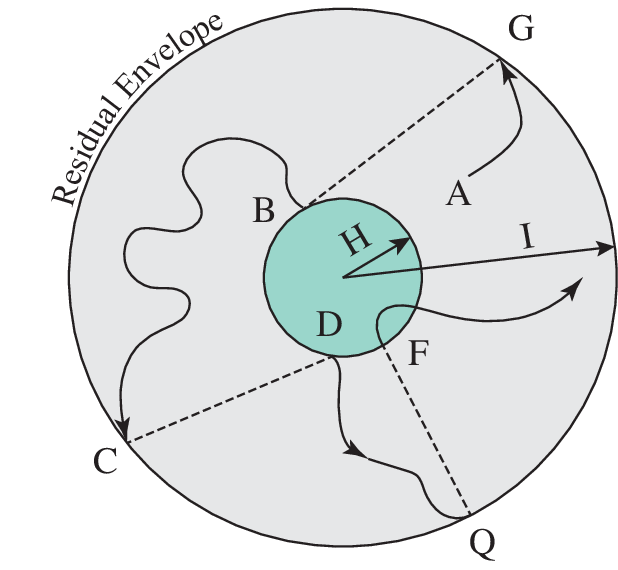}

\caption{Geometric interpretation of residual contraction and forward invariance of the prescribed residual envelope under the hybrid plant-observer \eqref{eq:hyb_plant_observer}-\eqref{eq:CD_sets}.}
 \label{fig3}
  \vspace{-0.2in}
 \end{figure}
 
The contraction of the residual and the forward invariance of the residual envelope can be visualized using Fig.~\ref{fig3}. The outer ball represents the residual envelope $B = \{\res : \|\res\| \le \delta\}$. The residual at the initial time $t_0$ lies within this bound. At instants $t_1^-$, $t_2^-$, and $t_3^-$, when the residual reaches the boundary $\|\res\| = \delta$, a reset is triggered. The reset contracts the residual to the boundary of the inner ball whose radius is $\rho \delta$. As a result, the residual never leaves the set $B$, and the residual envelope remains forward invariant.
 
\begin{remark}
Claim (iv) shows that the hybrid observer provides the same uniform bound for the estimation error as the continuous time observer in Corollary \ref{cor:ct_uniform}.
\end{remark}


\begin{remark}[Measurement noise]
If the output is corrupted by bounded noise that is continuous at reset instants, the reset still contracts the residual. However, $\delta$ should be chosen sufficiently above the noise level to avoid noise-driven switching.
\end{remark}

\begin{corollary}\label{corr:initialization_outside}[Initialization outside the residual envelope]
Suppose Assumptions~\ref{ass:rate}-\ref{ass:fullrankC} hold. If the reset gain \eqref{eq:J_constructive} is chosen with $\rho=0$ and $\|\res(t_0)\|\ge \delta$, then the residual satisfies:
\[
\|\res(t)\|\le \delta, \qquad \forall t\ge t_0^+
\]
\end{corollary}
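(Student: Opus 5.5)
Since $\|\res(t_0)\|\ge\delta$, the initial state lies in the jump set $\mathcal{D}$ of \eqref{eq:CD_sets}, so jump priority forces a reset at $t_0$. The plan is to show that this single reset, with $\rho=0$, places the residual in the interior of the envelope. From there, claim (ii) of Theorem~\ref{thm:main} takes over.

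First, we evaluate the reset at $t_0$. With $\rho=0$, \eqref{eq:J_constructive} gives $J=P^{-1}C^\top(CP^{-1}C^\top)^{-1}$. Assumption~\ref{ass:fullrankC} ensures $CP^{-1}C^\top\succ0$, so $J$ is well defined. Then \eqref{eq:jump_error_res} gives
\[
\res(t_0^+)=C(\Id-JC)\tilx(t_0^-)=(C-C)\tilx(t_0^-)=0,
\]
which is the computation from claim (i) with $\rho=0$. In particular, $\|\res(t_0^+)\|=0<\delta$ regardless of how large $\|\res(t_0)\|$ was. This is why the corollary requires $\rho=0$. For $\rho>0$, claim (i) only yields $\|\res^+\|=\rho\|\res^-\|$, which need not be at most $\delta$ when $\|\res^-\|>\delta$.

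Second, after the jump the state $(x(t_0^+),\xhat(t_0^+))$ lies in $\mathcal{C}\setminus\mathcal{D}$, so the solution flows. We restart the analysis at $t_0^+$ with initial residual norm $0\le\delta$. The hypotheses of Theorem~\ref{thm:main} hold from this new initial condition. The only one needing comment is Assumption~\ref{ass:init_set}: the post-reset error is $(\Id-\Pi)\tilx(t_0)$, and claim (iii) gives $V(t_0^+)\le V(t_0)$, so boundedness is preserved. The forward-invariance argument of claim (ii) then applies verbatim. The residual is continuous along flows, every subsequent jump occurs exactly when $\|\res\|=\delta$, and each such jump sends the residual to $0$. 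Hence $\|\res(t)\|\le\delta$ for all $t\ge t_0^+$.

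The main technical point is ensuring that the hybrid solution is well defined after the initial reset. Two things must be ruled out: that the state lands back in $\mathcal{D}$ (which would cause an immediate second jump), and that Zeno behavior arises thereafter. The first is excluded because $\res(t_0^+)=0$ and $\delta>0$. For the second, we rely on the minimum dwell-time result the paper invokes in the proof of claim (iv). A dwell time exists because the residual must travel from $0$ to norm $\delta$ under flow dynamics. On bounded error sets, these dynamics have bounded rate $\|C\|(\|A_c\|M+\bar d)$, which gives a positive lower bound on the time between jumps.
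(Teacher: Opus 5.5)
Your proposal is correct and follows the paper's proof: setting $\rho=0$ in claim (i) gives $r(t_0^+)=0$, and the forward-invariance argument of claim (ii) then applies from $t_0^+$. Your extra well-posedness remarks (landing in $\mathcal{C}\setminus\mathcal{D}$, keeping the error bound via claim (iii), and invoking the dwell-time result) are sound details that the paper leaves implicit.
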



\begin{proof}
With  $\rho=0$ in \eqref{eq:thm_res_contract}, $\res^+ =0$. Hence, after the first reset at $t_0$, the residual lies in the interior of the residual envelope $\{r:\|r\|\le \delta\}$. The forward-invariance argument of Theorem \ref{thm:main} then implies $\|r(t)\|\le\delta$ for all $t\ge t_0^+$.
\end{proof}

\begin{remark}
If $\delta \ge \|C\|\bar x_0$, then $\|r(t_0)\|\le\delta$ follows from Assumption~\ref{ass:init_set}. If instead $\|r(t_0)\|>\delta$, choosing $\rho=0$ gives $r(t_0^+)=0$, and invariance thereafter follows from Corollary~\ref{corr:initialization_outside}.
\end{remark}

\subsection{Exclusion of Zeno Behavior - Minimum Dwell Time}\label{sec:wellposed}

To exclude Zeno behavior, we establish that two consecutive resets are separated by a strictly positive minimum dwell time \cite{goebel2012hybrid}. We now establish such a minimum dwell time for the observer \eqref{eq:hyb_plant_observer}. Since jumps are triggered when $\|r\|=\delta$, it suffices to bound the growth rate of the residual during flows.
During flows, $r(t) = C\tilx(t)$, and therefore using \eqref{eq:error_dyn}
\[
\dot r(t)
=
C\dot{\tilx}(t)
=
C\left[A_c \tilx(t) + d(t)\right]
\]

\noindent Using the triangle inequality and Assumption \ref{ass:rate}, we get
\begin{equation}\label{eq:r_dot_bound}
\|\dot r(t)\|
\le
\|C\|\,\|A_c\|\,\|\tilx(t)\|
+
\|C\|\,\bar d
\end{equation}

\noindent Using \eqref{eq:thm_hybrid_ub_x} in \eqref{eq:r_dot_bound} yields
\begin{equation}\label{eq:residual_rate}
\|\dot r(t)\| \le \bar \ell,
\qquad
\bar \ell
\triangleq
\|C\|\,\|A_c\|\,M
+
\|C\|\,\bar d
\end{equation}

\noindent Let $t_k$ and $t_{k+1}$ denote two consecutive reset times. Immediately after $t_k$, $\|r(t_k^+)\|=\rho\delta$. A subsequent reset requires $\|r\|$ to increase to $\delta$. Integrating \eqref{eq:residual_rate} and taking norms, we obtain
\[
\|r(t)\|
\le
\|r(t_k^+)\|
+
\int_{t_k}^{t} \|\dot r(s)\| ds
\le
\rho\delta + \bar\ell (t-t_k)
\]
Thus, the earliest time at which $\|r(t)\|$ can reach $\delta$
satisfies
$
\delta
\le
\rho\delta + \bar\ell (t_{k+1}-t_k)
$, 
which implies
\begin{equation}\label{eq:dwell_time}
t_{k+1}-t_k
\ge
\left[(1-\rho)\delta\right]/ \bar\ell  >0
\end{equation}

\noindent Therefore, a strictly positive minimum dwell time is guaranteed between consecutive jumps.

\section{Results}\label{sec:results}

We compare the performance of the proposed observer \eqref{eq:hyb_plant_observer}-\eqref{eq:CD_sets} with that of a standard Luenberger observer in \eqref{eq:ct_observer} using simulations. The system matrices in \eqref{eq:plant} are chosen as:
\begin{align}
\label{eq:sim_plant_parameters}
A = \left[\begin{array}{rrrr}-0.5 0& 1.00 & 0.00 &1.00\cr -1.00 & -0.25  & 0.60 & -0.50 \cr 0.00& -0.80 &-0.15 &0.70\cr 1.00 & 0.00 & 0.50 & 0.00\end{array}\right] &\,\, B = \begin{bmatrix}0\cr0\cr1\cr0\end{bmatrix}\cr
C = \begin{bmatrix} 1 & 0 & 0 & 0\cr 0 & 0 & 1 & 0 \end{bmatrix}
\end{align}

\noindent The plant is subjected to a time-varying bounded disturbance $d(t)= \begin{bmatrix} d_1(t) & 0 & 0 & 0\end{bmatrix}^\top$ such that the disturbance is active only on the interval $t\in[2,5]$, described as follows
\begin{equation}\label{eq:disturbance_parameters}
d_1(t) =
\begin{cases}
0, & t < 2 \\
15 \sin(2t), & 2 \le t < 5 \\
0, & t \ge 5
\end{cases}
\end{equation}
\begin{figure}[t!]
 \centering
\psfrag{A}[c][c][1][0]{\scriptsize {time (sec)}}
\psfrag{B}[c][c][1][0]{\scriptsize $\|r(t)\|$}
\psfrag{C}[c][c][1][0]{\scriptsize $ (1- \rho) \delta$ }
\psfrag{D}[c][c][1][0]{\scriptsize $\delta$}
\psfrag{E}[c][c][1][0]{\scriptsize  $d_1(t)$}
 \includegraphics[width=\linewidth]{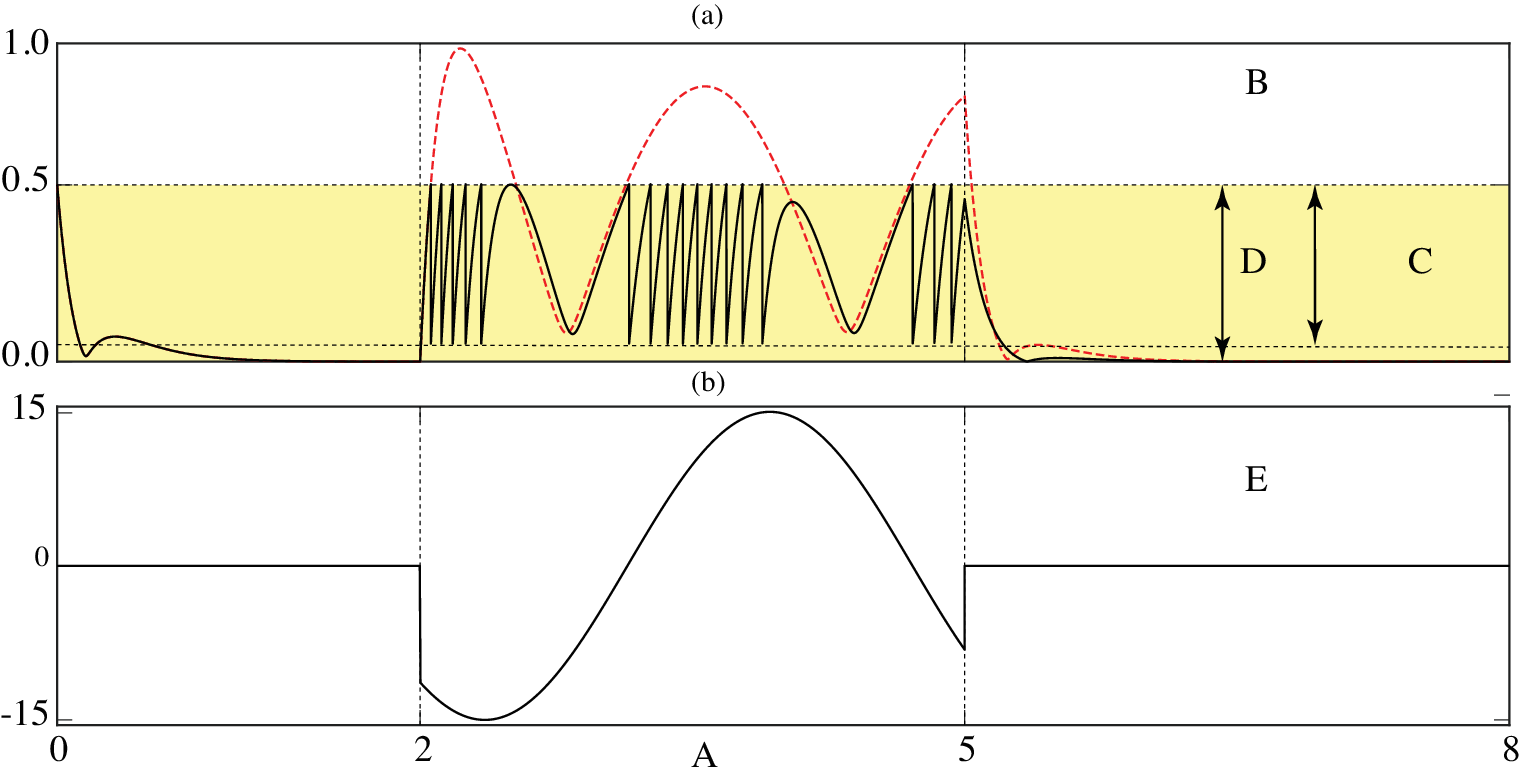}

\caption{(a) Output-residual norm $\|r(t)\|$ for the continuous-time observer  (dashed red) and the proposed hybrid observer (solid black).  The prescribed residual bound $\delta$ is indicated by the horizontal line. The hybrid observer enforces $\|r(t)\|\le\delta$ via residual-triggered resets, whereas the continuous-time observer exceeds the bound when the disturbance is active. (b) Disturbance signal $d_1(t)$, active on the interval $2 \le t \le 5$ s.}
 \label{fig1}
 \end{figure}
   \begin{figure}[b!]
 \centering
\psfrag{A}[c][c][1][0]{\scriptsize {$\tilx_1(t)$}}
\psfrag{B}[c][c][1][0]{\scriptsize {$\tilx_2(t)$}}
\psfrag{C}[c][c][1][0]{\scriptsize {$\tilx_3(t)$}}
\psfrag{D}[c][c][1][0]{\scriptsize {$\tilx_4(t)$}}
\psfrag{E}[c][c][1][0]{\scriptsize {time (sec)}}
 \includegraphics[width=\linewidth]{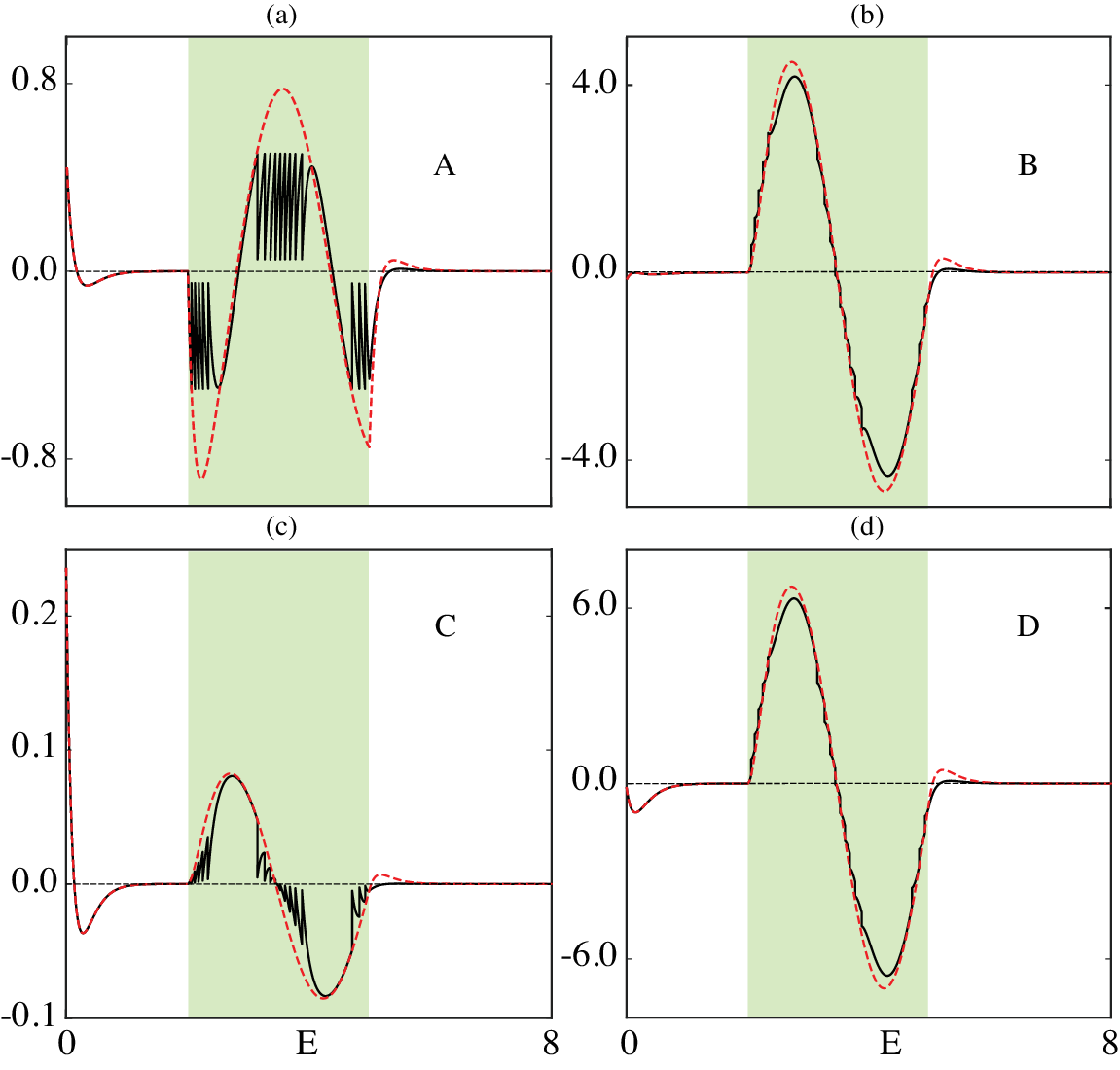}

\caption{Estimation errors $\tilde x_i(t)$, $i=1,\dots,4$, for the  continuous-time observer (dashed red) and the hybrid observer (solid black). The shaded region corresponds to the interval during which  the disturbance is active.}
 \label{fig2}

 \end{figure}

\noindent It can be verified that Assumptions~\ref{ass:rate}-\ref{ass:detect} are satisfied. The observer gain $L$ is designed by pole placement such that the eigenvalues of $A_c$ in \eqref{eq:error_dyn}  are located at $-5$, $-6$, $-7$, and $-8$. The initial conditions ($t_0 = 0$) are chosen as
\[
x(0) = \begin{bmatrix} 1.5 & -0.5 & 0.8 & -0.4 \end{bmatrix}^\top,
\qquad
\hat x(0) = 0.7\,x(0)
\]
so that a nonzero initial estimation error is present. The residual bound $\delta$ in \eqref{eq:CD_sets}
is selected as $\delta = \|r(0)\| = \|C(x(0)-\hat x(0))\|$. The reset gain of the hybrid observer in \eqref{eq:J_constructive} is constructed with $\rho = 0.1$, corresponding to a $90\%$ contraction of the residual magnitude at each reset, and the matrix $P$ was obtained by choosing $Q$ as identity matrix in \eqref{eq:lyap}. Simulation results are presented in Figs.~\ref{fig1} and~\ref{fig2}.

Fig.~\ref{fig1}(a) compares the residual norm $\|r(t)\|$ for the  continuous-time observer (dashed red) and the proposed hybrid observer (solid black).  Fig.~\ref{fig1}(b) shows the disturbance signal $d_1(t)$. As seen in Fig.~\ref{fig1}(b), the disturbance is active on the interval  $2 \le t \le 5$~s. During this interval, the residual of the continuous-time  observer exceeds the prescribed bound $\delta$, as shown in Fig.~\ref{fig1}(a). In contrast, the hybrid observer enforces the residual constraint  $\|r(t)\| \le \delta$ for all $t \ge 0$, in agreement with Theorem~\ref{thm:main}. 
Whenever the residual norm reaches the boundary $\|r(t)\|=\delta$,  a reset is triggered and the residual is instantaneously contracted to  $\rho\delta$, consistent with the construction of the reset gain in  \eqref{eq:J_constructive}. This contraction behavior is clearly visible at each reset event in Fig.~\ref{fig1}(a).

Figs.~\ref{fig2}(a)-(d) show the individual state-estimation errors  $\tilde x_1,\tilde x_2,\tilde x_3,$ and $\tilde x_4$, respectively.  Dashed red lines correspond to the continuous-time observer, while solid black curves correspond to the hybrid observer. When the disturbance becomes active (shaded interval), the observer resets introduce jumps in the state-estimation errors. In particular, the magnitude of $\tilde x_1$ decreases after each reset, while the remaining estimation errors also exhibit reductions at several reset instants and remain close to those of the continuous-time observer.

It should be noted that the analysis guarantees a decrease of the Lyapunov function $V=\tilde x^\top P\tilde x$ at each reset, rather than a decrease of every individual estimation-error component. Since the reset gain $J$ in \eqref{eq:J_constructive} depends on $P$, which in turn depends on $Q$ in \eqref{eq:lyap}, the choice of $Q$ influences how the reset correction is distributed among the individual estimation-error components. Systematic selection of $Q$ to shape this state-wise reset behavior is left for future work.

\begin{figure}[t!]
 \centering
\psfrag{A}[c][c][1][0]{\scriptsize {time (sec)}}
\psfrag{B}[c][c][1][0]{\scriptsize $\|r(t)\|$}
\psfrag{C}[c][c][1][0]{\scriptsize $ 0.9 \delta$ }
\psfrag{D}[c][c][1][0]{\scriptsize $\delta$}
\psfrag{E}[c][c][1][0]{\scriptsize  $V(t) = \tilx(t)^\top\! P \tilx(t)$}
 \includegraphics[width=0.9\linewidth]{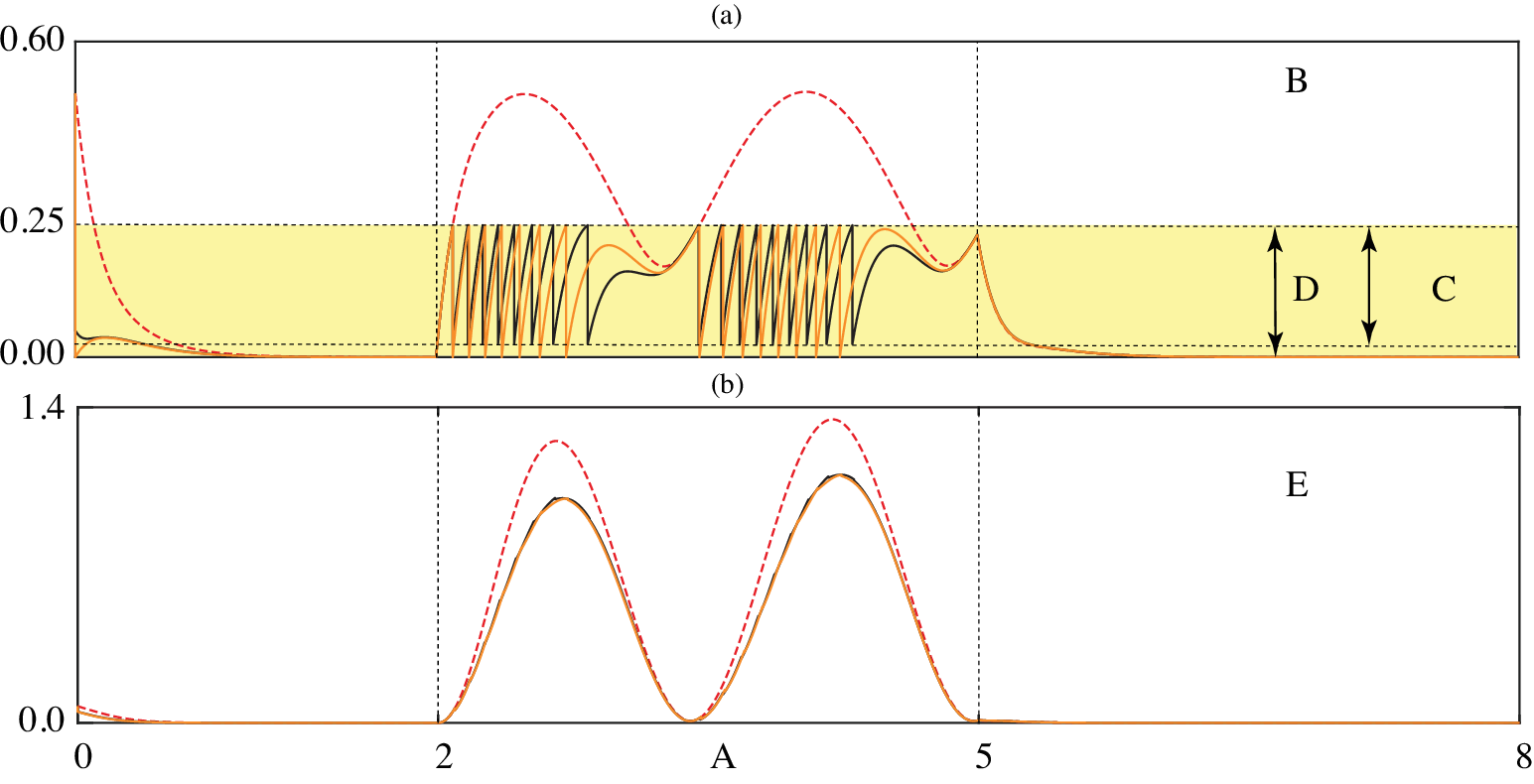}
 \vspace{-0.1in}
\caption{(a) Plot of $\|r(t)\|$ for the continuous-time observer  (dashed red) and the proposed hybrid observer with $\|r(0)\| >\delta$, $\rho = 0$ (solid orange), and with $\|r(0)\| < \delta$, $\rho = 0.1$ (solid black).  (b) $V(t)$ with time.}
\vspace{-0.2in}
 \label{fig4}
 \end{figure}

We also evaluated our design for a detectable but not observable pair $(A,C)$ by choosing the rows of $A$ as$[-0.50 \;\;1.00 \;\; 0.00 \;\; 0.00]$, $[-1.00\;\;-0.25\;\;0.60 \;\;0.00]$, $[0.00\;\; -0.80\;\; -0.15\;\; 0.00]$ and $[0.00\;\; 0.00\;\; 0.00\;\; -2.00]$, rendering the fourth state unobservable while preserving detectability. The observable poles were placed at $-5$, $-6$, and $-7$. Two cases were considered: (i) $\|r(0)\|>\delta$ with $\rho=0$ and (ii) $\|r(0)\|<\delta$ with $\rho=0.1$. As shown in Fig.~\ref{fig4} (a), a reset at $t=0$ in case (i) enforces invariance thereafter, while case (ii) remains invariant throughout. The plot of $V$ with $t$ in Fig.~\ref{fig4}(b) confirms bounded estimation error.

\section{Conclusion}

This paper proposed a reset observer that enforces a prescribed bound on the observer residual (output error). The observer combines a continuous-time Luenberger observer with state resets triggered when the residual norm reaches a specified threshold. The reset structure guarantees residual contraction at jump instants, forward invariance of the prescribed residual envelope, and decrease of the estimation-error Lyapunov function, while preserving the uniform boundedness properties of the continuous-time observer. Simulation results support the analysis and show that the residual remains within the prescribed bound under bounded disturbances, whereas a standard Luenberger observer with the same gains may violate this bound.

In residual-based fault detection, unusually frequent resets may provide an additional indication of deviation from nominal behavior. Future work will investigate how reset frequency and reset signatures can be used to develop diagnostic strategies, particularly for distinguishing persistent deviations from transient disturbances. Other directions include robustness under measurement noise and noise-aware selection of the residual bound, integration with output-feedback control via separation-type analysis, and extension of the framework to nonlinear and sampled-data systems.

\bibliographystyle{IEEEtran}
\bibliography{ref}

\end{document}